\def\ps@headings{%
\def\@oddhead{\mbox{}\scriptsize\rightmark \hfil \thepage}%
\def\@evenhead{\scriptsize\thepage \hfil \leftmark\mbox{}}%
\def\@oddfoot{}%
\def\@evenfoot{}}
\title{Exploiting User Mobility for Wireless Content Delivery}
\author{\IEEEauthorblockN{Konstantinos Poularakis and Leandros Tassiulas}
\IEEEauthorblockA{Department of Electrical and Computer Engineering,\\
     University of Thessaly, Volos, Greece\\
     kopoular,leandros @inf.uth.gr}
\IEEEcompsocitemizethanks{
\IEEEcompsocthanksitem
This work has been published in IEEE International Symposium on Information Theory (ISIT), 2013, Istanbul, Turkey.
}
}
\begin{document}

\maketitle
\pagestyle{empty}
\pagenumbering{arabic}

\newtheorem{property}{Property}
\newtheorem{proposition}{Proposition}
\newcommand{\be}{\begin{itemize}} \newcommand{\ee}{\end{itemize}}
\newcommand{\tb}{\textbf} \newcommand{\ttt}{\texttt}
\newcommand{\tit}{\textit} \newcommand{\uline}{\underline}
\newcommand{\argmin}{\operatornamewithlimits{argmin}}
\newcommand{\argmax}{\operatornamewithlimits{argmax}}
\newtheorem{theorem}{Theorem} \newtheorem{lemma}{\bf Lemma}
\begin{abstract}
We consider the problem of storing segments of encoded versions of content files in a set of base stations located in a communication cell. These base stations work in conjunction with the main base station of the cell. Users move randomly across the space based on a discrete-time Markov chain model. At each time slot each user accesses a single base station based on it's current position and it can download only a part of the content stored in it, depending on the time slot duration.
We assume that file requests must be satisfied within a given time deadline in order to be successful.
If the amount of the downloaded (encoded) data by the accessed base stations when the time deadline expires does not suffice to recover the requested file, the main base station of the cell serves the request.
Our aim is to find the storage allocation that minimizes the probability of using the main base station for file delivery. This problem is intractable in general. However, we show that the optimal solution of the problem can be efficiently attained in case that the time deadline is small. To tackle the general case, we propose a distributed approximation algorithm based on large deviation inequalities. Systematic experiments on a real world data set demonstrate the effectiveness of our proposed algorithms.
\end{abstract}
\begin{keywords}Mobility-aware Caching, Markov Chain, MDS Coding, Small-cell Networks\end{keywords}

\section{Introduction} \label{section:1}
Today we are witnessing an unprecedented worldwide growth of mobile data traffic that is expected to continue at an annual rate of 78 percent over the next years, reaching 10.8 exabytes/month by 2016 [10].
These developments lead the research community to investigate ways for increasing area spectral efficiency of cellular networks. The most promising way to achieve this is the deployment of small base stations near to the users that handle a fraction of wireless content traffic in place of the conventional base station of the cell. The drawback of this approach is the high price of the required backhaul to the main base station. The recent work in [1] showed that equipping these base stations with storage capabilities alleviates the backhaul cost. However, little work has been done on exploiting the user mobility in taking storage management decisions in these systems.

A user of these systems is connected to the base station that is currently in communication range based on it's geographical position. The dense spatial deployment of the base stations highlights the scenario that mobile users connect to more than one base stations as they move over time. This transition can happen in a \emph{short time period} of some minutes taking into account the typical values of cell radius $(\approx 400m)$ and the deployment of some decades of base stations in the cell. Thus, only a \emph{fraction} of the requested data may be downloaded by the connected base station.
Besides, depending on the user preferences, it may be acceptable for it to wait a time period until the requested file is delivered. Thus, the user can fetch the entire content file that wishes from different base stations that encounters as it moves within a given time deadline.

In this work, we allow coding in storage decisions at the base stations. This means that encoded versions of the content are stored at the caches instead of the raw data packets. By using an appropriate code, successful content delivery occurs when the total amount of the downloaded data by the encountered base stations within the time deadline is at least the size of the requested file [4].
If the amount of the downloaded data when the time deadline expires does not suffice to recover the requested file, the main base station of the cell serves the request.

In this paper, we focus on the above storage allocation problem at the base stations of the cell. Our goal is to minimize the fraction of file requests that are served by the main base station. The total amount of content stored at a base station is upper bounded by the capacity of it's cache. Traditionally, storage allocation at a cache node is performed based on the average content popularity near that node. In contrast, we \emph{exploit the user mobility in storage decision taking}.
We assume that the user mobility pattern follows a discrete-time Markov model. This is a realistic assumption, as the future position of a user highly depends on it's current position. Thus, the user movement within a given time deadline can be represented by a random walk on a Markov chain. The probabilities of transition can be efficiently derived using learning mechanisms [7].

Our work builds upon the \emph{Femtocaching architecture} proposed in [1] by exploiting user mobility statistics for storage allocation decision taking. The technical contributions of this paper can be summarized as follows:
\begin{itemize}
\item Specifying the coded storage allocation problem.
\item Presenting an \emph{optimal} solution of the problem for short time deadline based on branch and bound algorithms. 
\item Proposing a \emph{distributed} approximation algorithm for arbitrarily large time deadline: We minimize an appropriate probability bound.
\item Evaluating the proposed schemes: We use \emph{traces} of a data set of real mobile users.
\end{itemize}

The remainder of the paper is organized as follows: Section II presents the related work and Sec. III presents the system model and the problem formulation. In Sec. IV we show that the problem is tractable for short time deadline case. Sec. V develops a distributed approximation algorithm by minimizing an appropriate probability bound. In Sec. VI we present our evaluation results. A summary concludes the paper in Sec. VII.
\section{Related work} \label{section:2}
The problem of storing segments of encoded versions of content in a distributed storage system is a well studied one in literature. Ntranos et al. [2] studied the above problem aiming to maximize the probability of recovery of the content after a random set of nodes fail given a total storage budget.
Their work generalizes the results in [4], where homogeneous reliability parameters were assumed.
Furthermore, recent works in [3] and [5] studied the above problem in a delay tolerant network setting assuming that data recovery must be achieved within a given time deadline.

However, the methodologies used in all these works depend on the assumption of independent access of the data stored at the nodes. Thus, they are suitable for representing node failures scenarios or node encounters in delay-tolerant networks. In contrast, our work assumes that data access follows a Markov chain random model, that naturally represents the user movement in a cellular network. Besides, most of the existing work in the area simply assumes that as long as a user contacts a storage node, the complete requested data can be downloaded [3]. In contrast, we consider the realistic case that contact duration limits bottleneck the data transmission.
Finally, our work builds upon a novel network architecture [1] by exploiting user mobility statistics for strategic data placement.

\section{System model and problem formulation} \label{section:3}
We consider a single cell in which the main base station (MBS) serves the content requests of the mobile users that lie in it. A set of $n$ smaller base stations are geographically deployed in the cell. Following the notation in [1] we name these base stations as $helpers$. Each helper $h$ is endowed with a cache of size $|C_h|$. Let $O$ denote a static collection of $|O|$ content files of sizes $|O_i|$, $i=1,...,|O|$. We consider the case that the coverage areas of the helpers are non-overlapping. Thus, a user is in communication range with the nearest helper each time. We denote by $P_{i/h}$ the probability that a user generated request that happens in the area around helper $h$, corresponds to file $O_i$.

As the users move in space they encounter different helpers over time.
In order to represent the system evolution we use a time-homogeneous discrete-time Markov chain $M$ of $n$ states named as $X_1,...,X_n$, where state $X_h$ denotes that a specific user accesses helper $h$. The initial probability distribution of the chain is denoted by $P_{init}$. Clearly, a large value of $P_{init}(h)$ means that the area around helper $h$ is highly populated. The probability of transition $M_{h',h}$, denotes the probability that a specific user encounters the helper $h$ at a time slot, given that the previous time slot the user was connected to the helper $h'$. Typically, pairs of helpers that are neighbors will present higher probabilities of transition than the remote pairs.
The time slot duration is limited, resulting that at most $b_h$ bytes of data can be downloaded each time slot a user contacts helper $h$. The different values of the parameters $b_h$ reflect the bandwidth and the average workload heterogeneity of the helpers [1].

Let $v=(V_1,V_2,...,V_d)$ denote a $d-$step random walk on $M$, where $V_i \in \{1,...,n\}$. Observe that a user can encounter more than one times the same helper within the time deadline $d$. Thus, $v$ is a \emph{multiset}. The probability of accessing the set of helpers in $v$ by a mobile user within $d$ equals to:
\begin{align*}
P(v)=    p_{init} (V_1) \prod_{   h=1    }^{d-1}   M_{V_h,V_h+1} \text{ }\text{ }\text{ }\text{ }\text{ }\text{ }\text{ }\text{ }\text{ }\text{ }\text{ }\text{ } \text{ }\text{ }\text{ }\text{ }\text{ }\text{ }    (1)
\end{align*}

Figure 1 depicts the discussed system model.
\begin{center}
\includegraphics[scale=0.4]{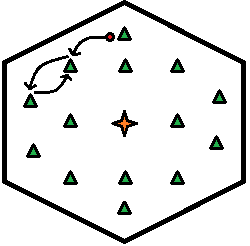}
\end{center}
\begin{flushleft}
{\small
Figure 1. Graphical illustration of the discussed model. MBS lies on the center of the cell. The helpers (triangles) are deployed around MBS. A mobile user (circle) connects to a subset of the helpers within a given time deadline $d$. Here $d=4$ and $v$ consists of three distinct helpers, one of which appears twice in $v$.}
\end{flushleft}

In this work, we focus on determining the storage allocation of these $|O|$ files at the $n$ helpers aiming to minimize the fraction of file requests that are served by the main base station of the cell. Let the optimization variable $x_{h,i}$ indicate the fraction of (encoded) data of file $O_i$ stored at helper $h$ to $|O_i|$. We denote with $\mathcal{M}_{n,d}$ the set of all possible multisets consisted of elements in $\{1,2,...,n\}$ of size $d$. We also denote with $\mathcal{S}_{v}$ the set of distinct helpers that comes from the multiset $v$ by removing the duplicate elements. Let $\eta^h_v$ denote the number of times that helper $h$ appears in $v$.
Then, the probability of failed file delivery corresponding to an allocation $x$ equals:
\begin{align*}
P_f (x) =    \sum_{v \in \mathcal{M}_{n,d} }    P(v)  \sum_{i \in O} P_{i/V_1}  I_{\{  \sum_{h \in \mathcal{S}_{v}} \sum_{k=1}^{\eta^h_v}u_{h,i}^k < 1 \}}(2)
\end{align*}
where $I_{\{.\}}$ is the indicator function, i.e. $I_{\{S\}}=1$ iff $S=true$, else $I_{\{S\}}=0$.
$u^k_{h,i}$ denotes the fraction of file $i$ that can be downloaded from helper $h$ when a user contacts $h$ for the $k^{th}$ time. Clearly, \emph{there is no benefit to download again the same data already downloaded from $h$ at previous contacts}. Thus,
$u_{h,i}^1=\min{\{x_{h,i},\frac{b_h}{|O_i|} \}}$ and $u^k_{h,i}=\min{\{x_{h,i}-\sum^{k-1}_{l=1} u^{l}_{h,i}, \frac{b_h}{|O_i|} \}}$, $k=2,3,...,d$.

The problem of performing the storage allocation that minimizes the probability of failed file delivery is the following:
\begin{align*}
\min _{x}  \text{ }& P_f(x) &(3)\\
s.t.\text{ } &  \sum_{i=1}^{|O|} |O_i| x_{h,i} \leq |C_h|, \forall h=1,...,n&(4)\\
& x_{h,i} \in [0,1], \forall h=1,...,n, i=1,...,|O| &(5)
\end{align*}
,where inequalities in (4) denote the cache capacity constraints.
Inequalities in (5) indicate the non-negativeness of the optimization variables and that it is wasteful to allocate to a cache more than one unit of the same file. The above problem is difficult to solve due to it's non-convex nature and the high number of different multisets of helpers that a user can encounter within the time deadline $d$. Clearly, there exist $n^d$ such multisets.

\section{Small-scale optimal solution } \label{section:4}
In this section we show how to solve optimally the discussed storage allocation problem in a small scale. In other words, we focus on scenarios consisting of a small number of helpers (some decades) and a small deadline parameter $d$. Because of the small value of $n^d$, we can implicitly enumerate all the possible multisets of helpers encountered by a user within $d$.

We first define the binary variable $T_i^v$ to denote whether a user can download sufficient amount of data of file $i$ from the helpers encountered according to the random walk $v$. Thus, $T_i^v$ is defined as follows:
\begin{align*}
T_i^v=
\begin{cases}
 1,  & if \text{ }   \sum_{h \in \mathcal{S}_{v}} \sum_{k=1}^{\eta^h_v}u_{h,i}^k \geq 1 \\
 0,  & else
\end{cases}
\text{ }\text{ }\text{ }\text{ }\text{ }\text{ }\text{ }\text{ }\text{ } (6)
\end{align*}

Then, we can formulate the discussed problem as a \emph{Mixed Integer Programming (MIP)} problem as below:
\begin{align*}
\min _{x,u,T}  \text{ }&   \sum_{v \in \mathcal{M}_{n,d} }    P(v)  \sum_{i \in O} P_{i/V_1}   (1-T_i^v)    &(7)\\
s.t.\text{ } & (4)-(5),\\
& u^k_{h,i} \in [0,\frac{b_h}{|O_i|}], \text{ }k=1,...,d, \text{ } \forall h,i &(8)\\
& \sum_{k=1}^{d} u_{h,i}^k \leq x_{h,i}, \text{ }\forall h,i &(9)\\
& \sum_{h \in \mathcal{S}_{v}} \sum_{k=1}^{\eta^h_v}u_{h,i}^k \geq T_i^v, \forall i,v &(10)\\
& T_i^v \in \{0,1\}, \forall i,v &(11)
\end{align*}
Inequalities (8)-(9) are added because of the definition of variables $u$. The number of optimization variables is $|O|(n^d+nd+n)$. It is known that the above problem can be efficiently solved using Branch and Bound algorithms [5] for a small number of optimization variables.

\section{Large-scale approximate solution} \label{section:5}
In this section we establish an approximate solution of the discussed problem for arbitrarily large problem instances by minimizing an appropriate probability bound. A similar approach was used in [2]. We start with the following lemma:

\begin{lemma}(Chernoff-Hoeffding probability bound [6])

Let $E$ be an ergodic Markov chain with state space $S$ and stationary distribution $\pi$. Let $(V_1, . . . , V_t)$ denote a $t-$step random walk on $E$ starting from an initial distribution $\phi$ on $S$. For every $i \in \{1,2,...,t\}$, let $f_i : S \rightarrow [0,1]$ be a weight function at step $i$ such that the expected weight $E_v[f_i(v)]= \mu$ for all $i$. Define the total weight of the random walk $(V_1,...,V_t)$ by $X_t= \sum_{i=1}^t f_i(V_i)$. There exists some constant $c$ (which is independent of $\mu$ and $\delta$), such that:
$Pr[X_t \leq (1 - \delta) \mu t ] \leq  c ||\phi||_{\pi} exp(- \frac{ \delta ^2 \mu t } {72 T })$, for $0 \leq \delta \leq 1$, where $T$ is the mixing time of $E$, defined as $T=\min \{  t: \max_q ||qE^t - \pi||_{TV} \leq \frac{1}{8}\}$, $q$ is an arbitrary initial distribution over $E$, $||u-v||_{TV} = \max_{A \subseteq V}|\sum_{i \in A}u_i -\sum_{i \in A}w_i|$ and $||u||_{\pi}=\sqrt{\sum_{x \in S} \frac{u_i^2}{\pi(i)}  }$.
\end{lemma}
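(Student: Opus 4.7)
The plan is to use the standard Chernoff/moment-generating-function approach, but carried out with matrix-valued tools adapted to Markov chains, in the spirit of Gillman's expander Chernoff bound. The inequality is one-sided, so I would work with $-X_t$: for any parameter $r>0$, Markov's inequality gives
\begin{equation*}
\Pr[X_t \leq (1-\delta)\mu t] \leq e^{r(1-\delta)\mu t}\, \mathbb{E}[e^{-r X_t}],
\end{equation*}
and the entire task reduces to bounding the moment generating function $\mathbb{E}[e^{-r X_t}]$ and then tuning $r$.

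First I would encode the m.g.f.\ as a matrix product. Let $D_i$ be the diagonal matrix on $S$ with $(D_i)_{ss}=e^{-r f_i(s)}$, and let $P$ be the transition kernel of $E$. Using the Markov property together with conditioning step by step on $V_1,\dots,V_t$, one obtains
\begin{equation*}
\mathbb{E}[e^{-r X_t}] = \langle \phi,\, D_1 P D_2 P \cdots D_{t-1} P D_t \mathbf{1}\rangle,
\end{equation*}
where $\mathbf{1}$ is the all-ones vector. Passing to the $\pi$-weighted inner product (where $\|\cdot\|_\pi$ from the statement is the natural norm), Cauchy--Schwarz bounds the right-hand side by $\|\phi\|_\pi$ times the product of operator norms $\|D_i P\|_\pi^{t-1}$ (times a harmless factor from the last $D_t\mathbf{1}$). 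This is where the factor $\|\phi\|_\pi$ in the lemma comes from.

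The heart of the proof is a perturbation estimate of the form
\begin{equation*}
\|D_i P\|_\pi \leq e^{-r\mu}\bigl(1 + C\, r^2 T\bigr),
\end{equation*}
valid for small $r$, where $T$ is the mixing time. Morally, $D_i$ is close to the identity for small $r$, and $P$ restricted to the orthogonal complement of $\pi$ is a contraction with rate governed by the spectral gap, which is itself $\Theta(1/T)$. I would make this precise by decomposing any vector along $\pi$ and its $\pi$-orthogonal complement, expanding $D_i = I - rF_i + O(r^2)$ with $F_i=\mathrm{diag}(f_i)$, and tracking the quadratic form $\langle v, P^\top D_i^2 P v\rangle_\pi$. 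The mean-zero component of $F_i$ (namely $f_i - \mu$) interacts with $P$ through the spectral gap, producing the $r^2 T$ correction; the mean component contributes the $e^{-r\mu}$ factor. I expect this perturbation-cum-spectral-gap step to be the main obstacle, both because establishing the precise constant (the paper's $72$) requires carefully handling the second-order terms, and because the gap-to-mixing-time conversion must be done with the total variation definition of $T$ given in the statement.

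Chaining the bound across all $t$ steps yields $\mathbb{E}[e^{-r X_t}] \leq c\,\|\phi\|_\pi\, e^{-r\mu t}\exp(C r^2 T t)$ for a suitable constant $c$. Substituting back into the Markov inequality gives
\begin{equation*}
\Pr[X_t \leq (1-\delta)\mu t] \leq c\,\|\phi\|_\pi\, \exp\!\bigl(-r\delta\mu t + C r^2 T t\bigr),
\end{equation*}
and the last step is to optimize in $r$. Choosing $r$ proportional to $\delta\mu/T$ makes the exponent scale as $-\delta^2 \mu^2 t/T$. A more careful tracking of constants, together with the fact that $f_i$ is $[0,1]$-valued (so $\mu\leq 1$), yields the advertised exponent $-\delta^2\mu t/(72 T)$, completing the proof.
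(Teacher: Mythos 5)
First, a point of reference: the paper does not prove this lemma at all --- it is imported verbatim from reference [6] (Chung, Lam, Liu and Mitzenmacher), so there is no in-paper proof to compare against. Your plan should therefore be judged against the argument in [6], whose opening moves you have in fact reproduced: exponentiate, write $\mathbb{E}[e^{-rX_t}]$ as the matrix product $\langle \phi, D_1 P D_2 P \cdots D_t \mathbf{1}\rangle$, extract $\|\phi\|_\pi$ via Cauchy--Schwarz in the $\pi$-weighted inner product, and tune $r$ at the end. That skeleton is right, and you correctly locate the origin of the $\|\phi\|_\pi$ prefactor (the mismatch between the stationary expectation defining $\mu$ and the actual initial distribution $\phi$).

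The genuine gap is the central estimate $\|D_i P\|_\pi \le e^{-r\mu}(1+Cr^2T)$. This single-step perturbation bound needs $P$ to contract the $\pi$-orthogonal complement of $\mathbf{1}$ at rate $\Theta(1/T)$ per step, i.e., a spectral gap of order $1/T$. The identification of spectral gap with inverse mixing time holds (up to logarithms) only for \emph{reversible} chains, whereas the lemma assumes mere ergodicity --- and the chain $E$ that the paper actually feeds into this lemma in Section V, a layered directed tree with deterministic resets to the root, is about as non-reversible as possible: its transition operator restricted to $\mathbf{1}^\perp$ is not a contraction in any single step, so chaining per-step operator norms yields nothing. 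The proof in [6] avoids this by bounding the norm of \emph{blocks} of $T$ consecutive factors $D_{j+1}P\cdots D_{j+T}P$ at once, using the total-variation mixing-time definition to show that $P^T$ shrinks the orthogonal complement by a constant factor while treating the $D_i$'s as an $O(r)$ perturbation over the whole block; that is where the $T$ in the denominator of the exponent genuinely comes from. A secondary gap: your optimization as written gives exponent $-\delta^2\mu^2 t/T$ (since your m.g.f.\ correction term $Cr^2Tt$ carries no factor of $\mu$), and waving at ``$\mu\le 1$'' does not repair this because the claimed bound with $\mu$ to the first power is \emph{stronger}; one needs the Bernstein-type refinement in which the quadratic term is $O(r^2\mu Tt)$, exploiting $f_i\in[0,1]$ so that the relevant variance is at most $\mu$, after which $r\propto\delta/T$ delivers $-\delta^2\mu t/(72T)$.
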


Let $Y$ be the random variable indicating the fraction of the requested file that can be downloaded by a user within $d$, given the storage allocation $x$. Then, $P_f(x)=P[Y<1]$.
We use lemma 1 to derive an upper bound on the probability $P[Y < 1] \leq P[Y \leq 1]$ for an arbitrary allocation. $Y$ can be interpreted as the total weight $\sum_{i=1}^d f_i(V_i)$ of the random walk $(V_1,...,V_d)$, on an appropriately constructed markov chain $E$.
We construct the Markov chain $E$ as follows:

The state space $S$ of $E$ consists of $1+|O|\sum_{i=1}^{d}(n^i)$ states. We name one of the states of $S$ as the \emph{root}, indexed by $0$. We partition the other states into $|O|$ groups, such that the first group contains the first $\sum_{i=i}^d (n^i)$ states, the second group contains the next $\sum_{i=i}^d (n^i)$ states etc.
The states of these groups combined with the root form an \emph{hierarchy} of $d+1$ levels named as $\{0,1,...,d\}$. Root is the only state of level 0. Root is the unique parent of the $|O|*n$ states of level $1$, $n$ states for each of the groups.
Each state that is included in the level $l\in \{1,...,d-1\}$ of the group $g \in \{1,...,|O|\}$ is the unique parent for $n$ states of the level $l+1$ of that group. We define by $child(u,c)$ the $c^{th}$ child-state of state $u$. Besides, let $\mathcal{G}(u)$ denote the group that contains state $u$ and $\mathcal{P}_{u}$ denote the set of states that lie on the path between state $0$ and state $u$ (including the two endpoints). A state $u$ that belongs to the level $l>1$ and it is the $c^{th}$ child of another state, represents the connection of a user requesting file $\mathcal{G}(u)$, to helper $c$ at the time slot $l$. The $n$ first states of level 1 represent the connection to a node requesting file $1$ at time slot 1, the next $n$ states represent the same but requesting file $2$ etc. Figure 2 illustrates a simple example of $E$.

The initial probability $\phi $ of $E$ is given by:
\begin{align*}
\phi (u)=
\begin{cases}
P_{\mathcal{G}(u)/c'} P_{init}(c'),  & if \text{ }  {\exists c \in \{1,...,n*|O|\}:\atop u=child(0,c)\text{ }\text{ }\text{ }\text{ }\text{ }\text{ }  }\\
0, & otherwise
\end{cases}
\end{align*}
where, for ease of presentation, we denoted by $c'=c-(\mathcal{G}(u)-1)n$. According to $\phi()$, every walk starts at a state of level $1$.

The probability of transition from state $v$ to $u$ is given by:
\begin{align*}
E_{v,u}=
\begin{cases}
1-\alpha , &  if \text{ } v=u=0\\
\alpha P_{\mathcal{G}(u)/c'} P_{init}(c') , &  if \text{ }  {v=0, \text{ }\text{ }\text{ }\text{ }\text{ }\text{ }\text{ }\text{ }\text{ }\text{ }\text{ }\text{ }\text{ }\text{ }\text{ }\text{ }\text{ }\text{ }\text{ }\text{ }\text{ }\text{ }\text{ }\text{ }\text{ }\text{ }\text{ }\text{ }\text{ }\text{ }\text{ }\text{ }\text{ }\text{ }\text{ } \atop \exists c \in \{1,...,n|O|\} : u=child(0,c)} \\
M_{c',e},  & if \text{ } {\exists c \in \{1,...,n|O|\}, e \in \{1,...,n\}:\atop v=child(0,c), u=child(v,e)\text{ }}  \\
M_{c,e},  & if \text{ } {\exists c,e \in \{1,...,n\}, w \in S \setminus 0:  \text{ }\text{ }\text{ }\text{ }\text{ }\text{ }\text{ }\text{ }\text{ }\text{ }\atop v=child(w,c), u=child(v,e)\text{ }}  \\
1  , & if \text{ }{  |\mathcal{P}_{v}|=d+1, u=0       }\\
0, &otherwise
\end{cases}
\end{align*}
where $a \in (0,1)$ is a fixed parameter. The transitions from the states of the $d^{th}$ level to state 0 and from state 0 to itself are necessary for $E$ to be ergodic.
The weight of a state $u$ is defined as:
\begin{align*}
f_i(u)=
\begin{cases}
u_{ c', \mathcal{G}(u) }^1, &if \text{ } \exists c \in \{1,...,n|O|\}:  u=child(0,c) \\
u_{ c, \mathcal{G}(u) }^1, &if \text{ }  {\exists c \in \{1,...,n\}, \text w \in S \setminus 0: \text{ }u=child(w,c),\atop \not \exists y \in \mathcal{P}_{u} \setminus u:  \text{ }f_i(y)=u^{1}_{c,\mathcal{G}(u)} \text{ }\text{ }\text{ }\text{ }\text{ }\text{ }}\\
u_{ c, \mathcal{G}(u) }^k, &if \text{ }  {\exists c \in \{1,...,n\}, \text w \in S \setminus 0, \text{ }k \in Z^+: \text{ }u=child(w,c),\atop k=max \{ k: \exists y \in \mathcal{P}_{u} \setminus u:  \text{ }f_i(y)=u^{k-1}_{c,\mathcal{G}(u)} \} }\\
0,   &else
\end{cases}
\end{align*}

Observe that according to the definition of the $f_i()$ function, the weight of a state $u$ matches the fraction of the requested file that a user can download during the associated contact. This weight depends only on the number of the previous contacts of the user to the same helper, as explained in section III.
\begin{figure}
\centering
\includegraphics[scale=0.2157]{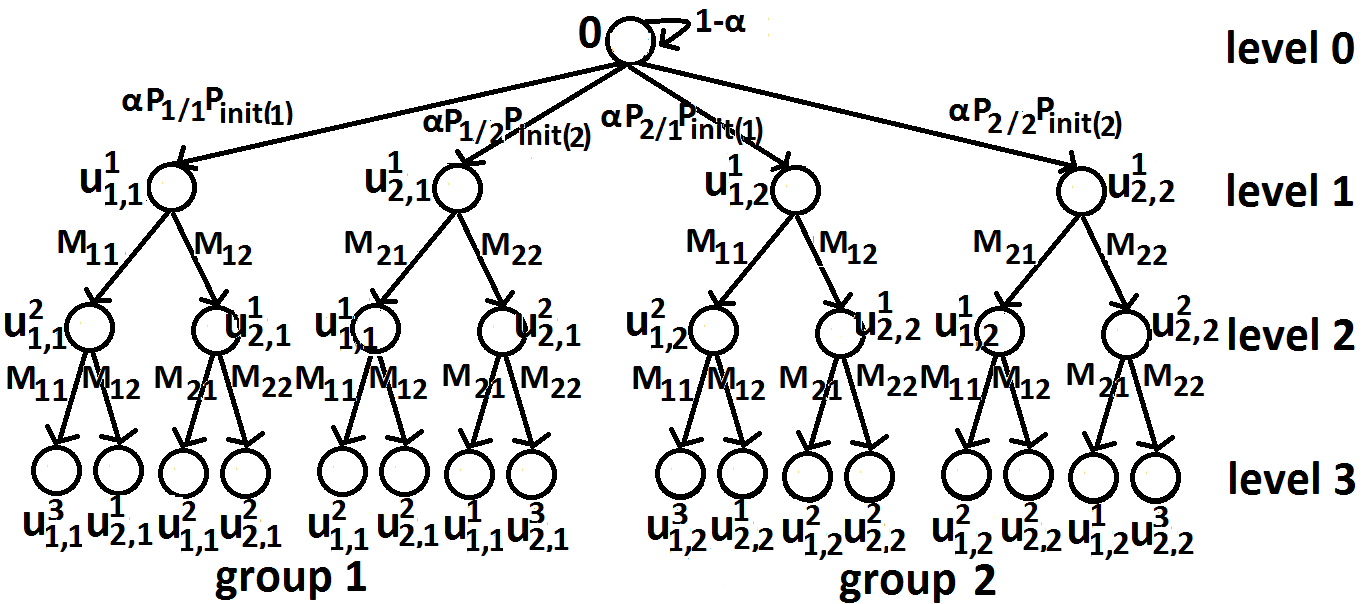}    
\vspace{-2mm}
\begin{flushleft}
{\small
Figure 2. An example of the Markov chain $E$. The parameters are set as follows: $n=2$, $d=3$ and $|O|=2$. Circles represent the states and arrows the transitions of non zero probability. State labels represent the state weights. Arrow labels represent the probabilities of transition. For ease of presentation we omitted the transitions that start from states of level 3. In reality, for each of these states there is a transition to the root state of probability $1$.}
\end{flushleft}
\vspace{-8mm}
\end{figure}

By construction of $E$, every $d-$step walk starts at a state of level 1 and ends at a state of level $d$, representing a walk of a user over the helpers within $d$. The requested file is specified by the group that includes these states. $M_{h',h}$ and $P_{i/h}$ may be equal to zero for some values of $i$, $h$ and $h'$. In order to ensure ergodicity of $E$, we exclude the states that are reachable by the root state with zero probability.
Let $\mu(u,x)$ be the expected weight of a walk on $E$ as a function of the weights. Set $\delta=1-\frac{1}{\mu(u,x) d}$ and require $\mu(u,x) d \geq 1$.  Then, lemma 1 yields:
\begin{center}
$P[Y<1] \leq c ||\phi||_{\pi} exp(- \frac{ \mu(u,x) d + \frac{1}{\mu(u,x) d} -2 } {72T })\text{ } (12)$
\end{center}
,where $\pi$ is the stationary distribution of $E$, $T$ is the mixing time of $E$ and $c$ is a constant. By definition, it holds that:
\begin{center}
$\mu (u,x) = \sum_{h=1}^{n} \sum_{k=1}^{d} \sum_{i=1}^{|O|}  P[u^k_{h,i}] u^k_{h,i} \text{ }\text{ }(13)$
\end{center}
where $P[u^k_{h,i}]$ denotes the probability of reaching any of the states of $E$ with weight equal to $u^k_{h,i}$ according to $\pi$.

Note that $\pi$ can be easily computed because of the special form of $E$; for any state $u \in S \setminus 0$ it is $\pi(u)=\pi(0)*E_{0,i_1}*E_{i_1,i_2}*...*E_{i_l,u}$, where $\mathcal{P}_{u}=\{0,i_1,...,i_l,u\}$. Besides, it holds that $\sum_{u \in S}\pi(u)=1$. Thus, after some computations we find that $\pi(0)=\frac{1}{1+\alpha d}$.
$P[u^k_{h,i}]$ can be defined as follows:

\begin{align*}
&\frac{P[u^k_{h,i}]}{\pi (0)\alpha} =  P_{init}(h)P_{i/h}*\sum_{{l_1+..+l_{k-1} \leq d-1 \atop l_1,..,l_{k-1} \geq 1  }} (\prod^{k-1}_{i=1}r_{h,h}(l_i)) +\\
&+\sum_{h' \neq h}(P_{init}(h')P_{i/h'} *\sum_{{l_1+..+l_{k} \leq d-1 \atop l_1,..,l_{k} \geq 1  }} r_{h',h}(l_1)\prod^{k}_{i=2}r_{h,h}(l_i)) (14)
\end{align*}
,where $r_{i,j}(l)$ denotes the probability that a user connects to the helper $j$ \emph{for the first time} at the $l^{th}$ time slot conditioned on the event that it connects to the helper $i$ at the first time slot. $r_{i,j}(l)$ is defined by the following set of \emph{recursive equations}:
\begin{center}
$r_{i,j}(l)=\sum_{k=1, k \neq j}^{n} r_{i,k}(l-1)M_{k,j}$\text{ }\text{ } \text{ }\text{ }\text{ }(15)\\
$r_{i,j}(1)=  M_{i,j}\text{ }\text{ }\text{ }\text{ }\text{ }\text{ }\text{ }\text{ }\text{ }\text{ }\text{ }\text{ }\text{ }\text{ }\text{ }\text{ }\text{ }\text{ }\text{ }\text{ }\text{ }\text{ }\text{ }\text{ }\text{ }\text{ }\text{ }\text{ }\text{ }\text{ }\text{ }\text{(16)}$
\end{center}

Equations (15)-(16) are very similar to the Chapman-Kolmogorov equations. However, Chapman-Kolmogorov equations do not require $k \neq j$ over the above summation. Simply speeking, equation (14) specifies that a user can initially be at helper $h$ and then encounters $h$ again $k-1$ times, or the user can initially be at a different helper $h'$ and then encounters $h$ $k$ times before the deadline $d$ expires.

In order to derive an approximate solution of the problem described in (3)-(5) we minimize the upper bound in (12). Lemma 2 shows that minimizing the aforementioned bound is equivalent to maximizing the expected weight $\mu(u,x)$.

\begin{lemma} Let $g(u,x)= c ||\phi||_{\pi}\exp(-\frac{\mu(u,x) d + \frac{1}{\mu(u,x) d} -2 }{72T}) $.
\begin{center}
$\text{Then,}\text{ }\underset{(u,x) \in \mathcal{A}}{\argmin} \text{ } g(u,x)  =
 \underset{(u,x) \in \mathcal{A}}{\argmax} \text{ } \mu (u,x)  \text{ }\text{ }\text{ }\text{ }\text{ }\text{ }\text{ }\text{ }\text{ }\text{ }\text{ }\text{ }\text{ }\text{ }\text{ }\text{ }\text{ }$
\end{center}
where $\mathcal{A}=\{ (u,x) \in \mathbb{R}^{n \times |O| \times d} \times \mathbb{R}^{n \times |O|} : (4)-(5),(8)-(9) \text{ are satisfied}\}$.
\end{lemma}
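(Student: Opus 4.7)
The plan is to observe that almost every quantity in $g(u,x)$ is independent of the optimization variables $(u,x)$, so the problem collapses to a one-dimensional monotonicity check. Specifically, the chain $E$ (its state space, initial distribution $\phi$, and transition matrix) is constructed from the fixed quantities $M_{h',h}$, $P_{i/h}$, $P_{init}$, and the fixed constant $\alpha$; none of these involve $(u,x)$. Consequently the stationary distribution $\pi$, the norm $\|\phi\|_{\pi}$, the mixing time $T$, and the constant $c$ from Lemma~1 are all fixed positive constants. Only the weights $f_i(\cdot)$, and therefore $\mu(u,x)$, depend on the optimization variables.

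Next I would peel off the monotone transformations in front of $\mu$. Writing $g(u,x) = K \exp\!\bigl(-h(\mu(u,x) d)/(72T)\bigr)$ with $K = c\|\phi\|_{\pi} > 0$, $T > 0$, and $h(t) = t + 1/t - 2$, minimizing $g$ over $\mathcal{A}$ is equivalent to minimizing $-h(\mu(u,x)d)$, hence to maximizing $h(\mu(u,x)d)$ over $\mathcal{A}$. Since these are strict monotone operations, the argmin and argmax sets are preserved.

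Finally I would argue that on the admissible range $t \geq 1$ (which is exactly the standing hypothesis $\mu(u,x)d \geq 1$ imposed for applying the Chernoff--Hoeffding bound), the scalar function $h(t) = t + 1/t - 2$ has derivative $h'(t) = 1 - 1/t^2 \geq 0$, so $h$ is non-decreasing. Therefore maximizing $h(\mu(u,x)d)$ is equivalent to maximizing $\mu(u,x) d$, and since $d$ is a fixed positive integer, this is in turn equivalent to maximizing $\mu(u,x)$. Chaining the equivalences yields $\operatorname{argmin}_{(u,x)\in\mathcal{A}} g(u,x) = \operatorname{argmax}_{(u,x)\in\mathcal{A}} \mu(u,x)$, as claimed.

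There is essentially no deep obstacle here; the only point requiring a moment's care is verifying that $\pi$, $T$, and $\|\phi\|_{\pi}$ really do not depend on $(u,x)$ (so that they may be treated as constants during the monotonicity argument), which follows immediately from the construction of $E$ in which $(u,x)$ enter only through the state weights, not through the transition probabilities.
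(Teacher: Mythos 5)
Your proposal is correct and follows essentially the same route as the paper: treat $c$, $\|\phi\|_{\pi}$, $T$ (and $\pi$) as constants independent of $(u,x)$, strip off the strictly monotone outer transformations, and use the monotonicity of $t\mapsto t+\frac{1}{t}-2$ on $t\geq 1$ (guaranteed by the standing requirement $\mu(u,x)d\geq 1$) to reduce to maximizing $\mu(u,x)$. The paper phrases this as an inequality chain starting from the minimizer $(u^*,x^*)$ rather than as a chain of equivalent optimization problems, but the content is identical.
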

\begin{proof}
Let $(u^*,x^*) = \underset{(u,x) \in \mathcal{A}}{\argmin} \text{ } g(u,x)$. Then, $g(u^*,x^*) \leq g(u,x)$, $\forall (u,x) \in \mathcal{A} $. Dividing by $c||\phi||_{\pi}$ and then taking the logarithm on both sides preserves the inequality as $c > 0$, $||\phi||_{\pi} > 0$ and $log(x)$ is strictly increasing. Thus, we have:
\begin{center}
$-\frac{\mu(u^*,x^*) d + \frac{1}{\mu(u^*,x^*) d} -2 }{72T} \leq -\frac{\mu(u,x) d + \frac{1}{\mu(u,x) d} -2 }{72T}$
\end{center}
Dividing by $-72T \leq 0$ and then adding 2 on both sides yields:
\begin{center}
$\mu(u^*,x^*) d + \frac{1}{\mu(u^*,x^*) d}  \geq  \mu(u,x) d + \frac{1}{\mu(u,x) d}$
\end{center}
However, it holds that: $\mu(u,x) d \geq 1$, resulting that:
$\mu(u^*,x^*) d \geq \mu(u,x) d, \text{ } \forall (u,x) \in \mathcal{A}$ \end{proof}

The optimization problem becomes as follows:
\begin{align*}
\max _{u,x}  \text{ }& \mu(u,x) &(17)\\
s.t. \text{ }&(4),(5),(8),(9)&
\end{align*}

By the structure of $\mu(u,x)$, that is defined in (13), and the above problem constraints, we observe that the storage allocation decisions at a helper do not affect the storage allocation decisions at the other helpers. Thus, we can \emph{decompose} the problem to $n$ independent \emph{linear programming} subproblems, one for each helper, and solve them in a \emph{distributed} manner.
It is known that he numerical solution of a linear programming problem can be efficiently attained using the simplex method. However, as we show below each of these subproblems falls into a class of tractable problems, with \emph{known solution structure}, alleviating the need for applying the simplex method.

Fractional knapsack problem asks for placing fractions of materials of different values and weights in a knapsack of limited capacity in order to maximize the aggregate value of the materials placed in it.
The storage allocation at a helper $h$ subproblem can be translated to a \emph{restricted version of the fractional knapsack problem} in which there exist $|O|*d$ materials, one material for each variable $u^k_{h,i}$, $i=1,...,|O|$, $k=1,...,d$ and a knapsack of capacity $|C_h|$. The value of the material corresponding to $u^k_{h,i}$ is $P[u^k_{h,i}]$ and it's weight is $|O_i|$.
The material placement must also satisfy the following two constraints:
1) We are restricted to place in the knapsack at most a $\frac{b_h}{|O_i|}$ fraction of the material corresponding to variable $u^k_{h,i}$.
2) The sum of the fractions of the materials corresponding to variables $u^k_{h,i}$, $k=1,...,d$, placed in the knapsack must not be greater than 1.
Using the exact same arguments used in [8], we can prove that the optimal solution of this knapsack-type problem can be attained by the following \emph{greedy} algorithm:

\fbox{ \parbox{0.9\linewidth} {Sort the materials in decreasing order of value per unit of weight. Then insert them into the knapsack, starting with as
large amount as possible of the first material, without violating any of the above two constraints, until there is no longer space in the knapsack for more.}}
\vspace{+1mm}

The solution of this knapsack-type problem, attained by the greedy algorithm, translates to a solution to the original storage allocation at helper $h$ subproblem such that $u^k_{h,i}$ variable takes the value equal to the fraction of the associated material placed in the knapsack and $x_{h,i}$ takes the value equal to $\sum_{k=1}^d u^k_{h,i}$.
The solution is independent of the values of $\alpha$ and $\pi$.
The complexity of the greedy algorithm comes mainly to the sorting of the values of the materials. Quicksort is the fastest sorting algorithm, resulting \emph{complexity} of $|O|d \log (|O|d)$.

\section{Performance evaluation} \label{section:6}
In this section we present the numerical experiments that we have conducted to evaluate the performance of the proposed algorithms. The algorithms have been applied to a cellular network consisting of $n=623$ helpers. The geographical position of the helpers as well as the mobility pattern of the users were acquired by the recording of wireless traffic logs included in the CRAWDAD data set [9]. The set contained 13888 logs, one for each mobile user, including details about the encountered helpers as well as the time that each encounter happened.
According to our model, location transitions of the users happen in a time slotted fashion. We set the time slot duration to be equal to $100$ seconds. Based on the data set, we set $P_{init}(h)$ to be equal to the frequency of time slots at which a user starts it's walk from helper $h$. Similarly, $M_{i,j}$ takes as value the frequency of time slots at which a user encounters sequentially within the same time slot the helpers $i$ and $j$. If the user does not encounter any other helpers by the end of the time slot, then we assume that it encounters again the same helper increasing the value of $M_{i,i}$.

In all simulations, we assume a collection of $|O|=100$ files each one of size $30$ MB. We use a Zipf-Mandelbrot model to formulate the files request pattern with a shape parameter $alpha$ and a shift parameter $q=10$. At each time slot at most $b_h = 15$ MB of the requested data file can be downloaded by the encountered helper $h$. A user is satisfied when the requested file is delivered to it within a time deadline that is equal to $300$ seconds, i.e. $d=3$ time slots. Such a time deadline is reasonable for a playback time for a typical video file.

Throughout, we compare the probability of failed file delivery achieved by three algorithms:
\begin{enumerate}
\item Heuristic Uncoded Algorithm $(HUA)$: the standard mode of operation currently in use in most storage systems. Each helper stores the most popular files that fit in it's cache independently from the others.
\item Approximation Coded Algorithm $(ACA)$: The solution of the knapsack-type problem described in section V.
\item Optimal Coded Algorithm $(OCA)$: The solution of the MIP problem described in section IV.
\end{enumerate}

\begin{figure}[ht]
\centering
\subfigure{
    \includegraphics[scale=0.295]{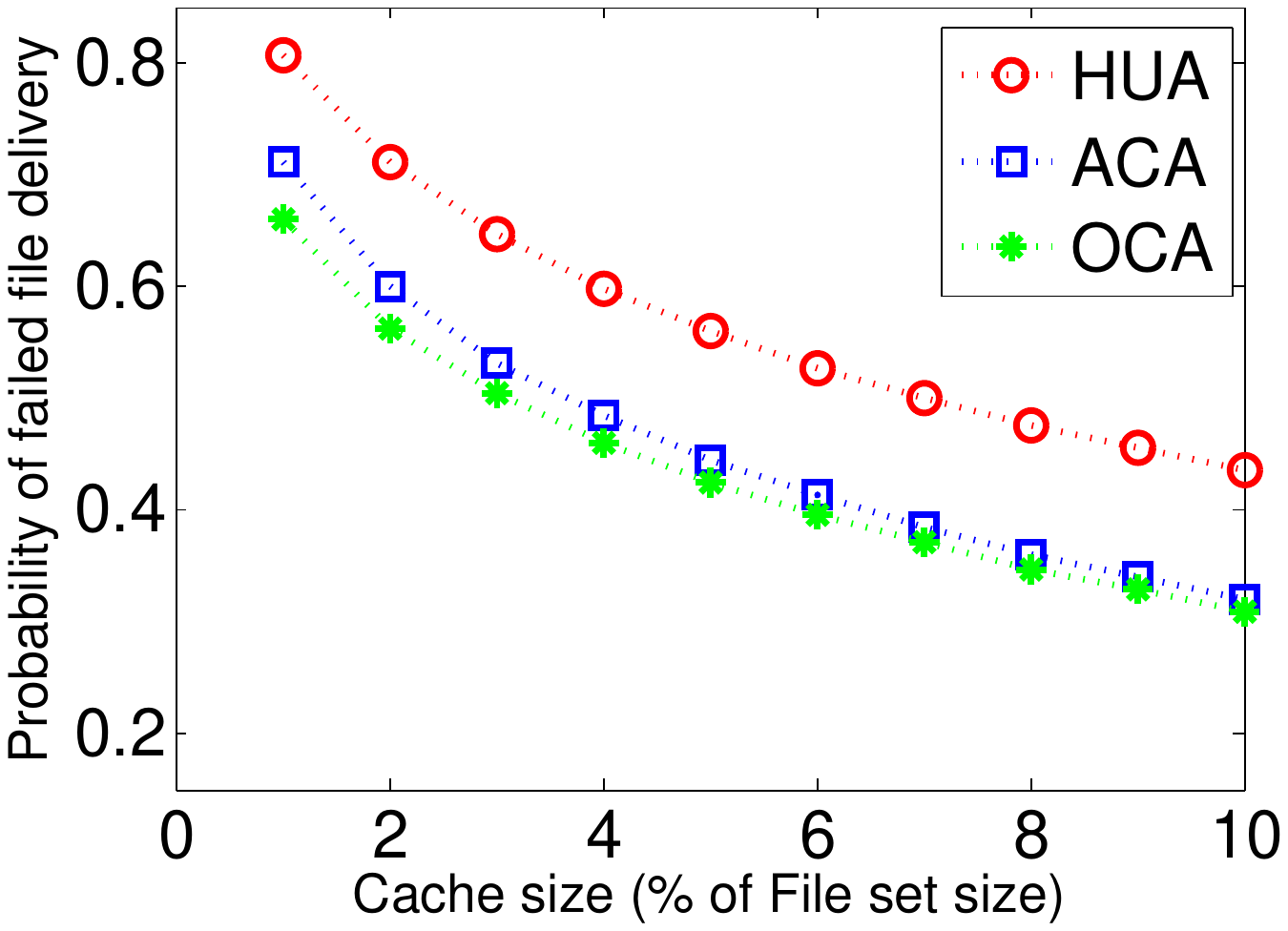}
}
\subfigure{
    \includegraphics[scale=0.2805]{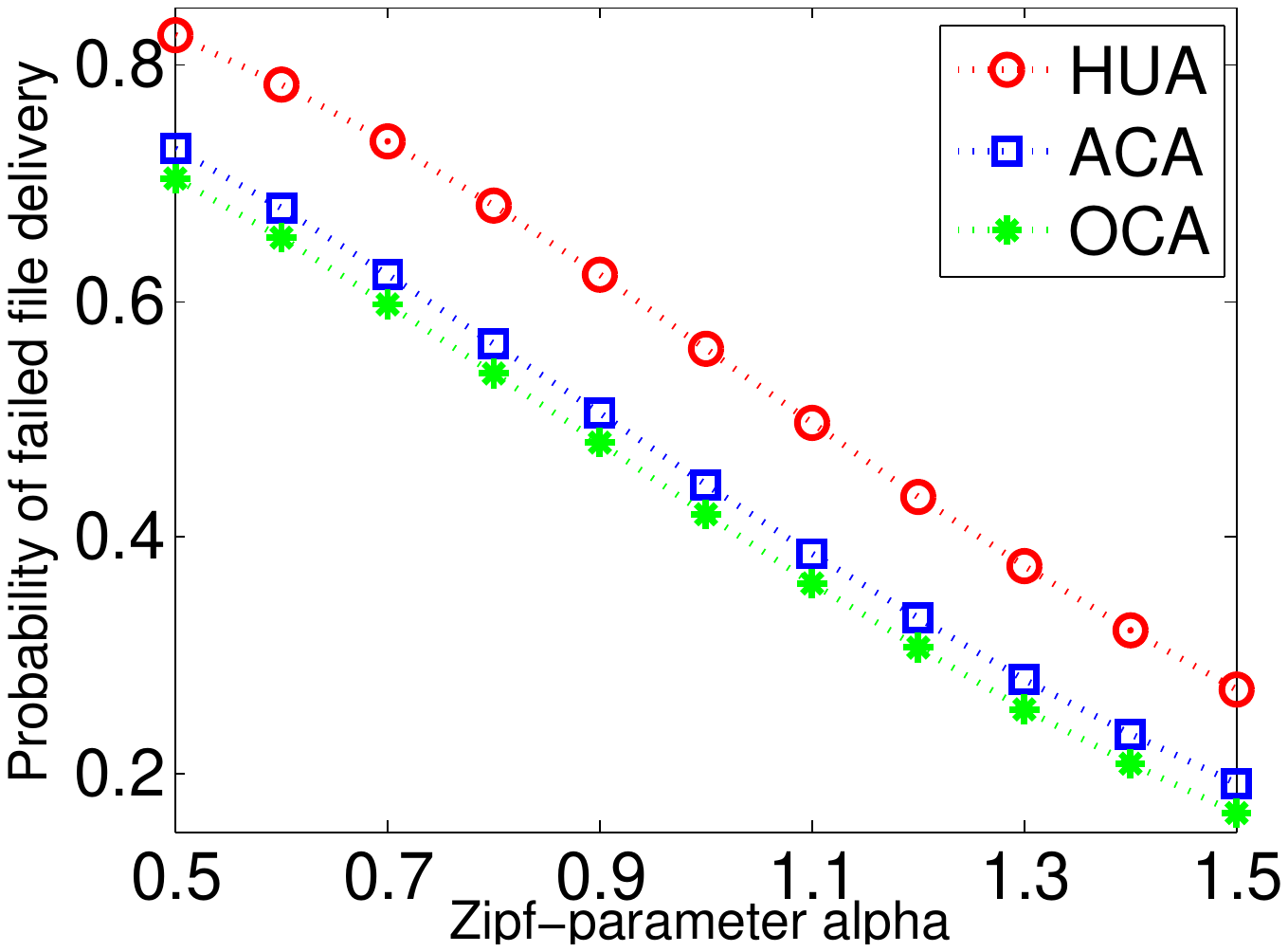}
}
\vspace{-1mm}
\text{ } \text{ } a) Impact of cache sizes   \text{ } \text{ }  b) Impact of Zipf-parameter
\vspace{-1mm}
\begin{flushleft}
{\small
Figure 3. Probability of failed file delivery as a function of a) the cache sizes and b) the Zipf-parameter $alpha$.}
\end{flushleft}
\vspace{-7mm}
\end{figure}

Figure 3(a) shows the results as a function of the cache size of the helpers for $alpha=1$. The cache size of each helper is varied from $1\%$ to $10\%$ of the entire file set size. As expected, increasing the cache size of each helper reduces the probability of
failed file delivery.
Figure 3(b) shows the results as a function of the parameter $alpha$. The cache size of each helper was set to
$5\%$ of the entire file set size. We can see from the graph that the probability of failed file delivery decreases with increasing values
of $alpha$, reflecting the well known fact that caching effectiveness improves as the popularity distribution gets steeper.
In general, we observe that $OCA$ is strictly better than $ACA$, which in turn is better than the $HUA$.
The performance achieved by the $OCA$ and $ACA$ are very close. This indicates that helpers can independently take
local storage decisions, and still have a significant gain. In summary, our algorithms perform $20-50\%$ better than the
conventional file placement scheme.

\section{Conclusion} \label{section:7}
In this paper we introduced a new storage allocation scheme for offloading traffic from the cellular network.
Our work builds upon a recent network architecture [1], with concerns on user mobility and limited contact duration time. Our main contribution is a distributed light-weight storage allocation algorithm.
We used a real trace of user movements and demonstrated significant performance gains compared to conventional schemes. Our algorithms can be easily extended to handle the case that the coverage areas of the helpers are overlapping.

\end{document}